\author{Mengce Zheng}
\institute{
  College of Information and Intelligence Engineering, Zhejiang Wanli University, China \\ 
  \email{mengce.zheng@gmail.com}, \email{mczheng@zwu.edu.cn}
}
\title{Notes on Small Private Key Attacks on Common Prime RSA\footnote{This work was supported by the National Natural Science Foundation of China, grant number 62002335, Ningbo Natural Science Foundation, grant number 2021J174, and Ningbo Young Science and Technology Talent Cultivation Program, grant number 2023QL007.}
}
\begin{document}

\maketitle

\keywords{Common prime RSA \and Cryptanalysis \and Small private key attack \and Trivariate integer polynomial \and Lattice}

\begin{abstract}
We point out critical deficiencies in lattice-based cryptanalysis of common prime RSA presented in ``Remarks on the cryptanalysis of common prime RSA for IoT constrained low power devices'' [Information Sciences, 538 (2020) 54--68]. 
To rectify these flaws, we carefully scrutinize the relevant parameters involved in the analysis during solving a specific trivariate integer polynomial equation.
Additionally, we offer a synthesized attack illustration of small private key attacks on common prime RSA.
\end{abstract}


\section{Introduction}\label{sect:introduction}

Common prime RSA, i.e., an enhanced RSA \cite{CACM:RivShaAdl78} variant, was first mentioned by Wiener \cite{TIT:Wiener90}, and later refined and named by Hinek \cite{CTRSA:Hinek06}. 
This RSA variant involves two balanced primes $p$ and $q$ with a special structure that provides resistance against previous attacks. 
Hinek defines $p=2ga+1$ and $q=2gb+1$, where $a$ and $b$ are coprime positive integers, and $g$ is a prime. 
Besides, $h=2gab+a+b$ is ensured to be a prime, and hence $(pq-1)/2$ equaling to $gh$ is a semiprime.

Its public/private exponents $e,\ d$ are defined in the key equation $ed\equiv 1 \pmod{\mathrm{lcm}(p-1,q-1)}$.
As $\mathrm{lcm}(p-1,q-1)=\mathrm{lcm}(2ga,2gb)=2gab$, we have
\begin{equation}\label{eqn:key}
	ed\equiv 1 \pmod{2gab},
\end{equation}
for an unknown integer $k$ relatively prime to $2g$. 
We denote the greatest common divisor as $g\simeq N^\gamma$, and its private exponent as $d\simeq N^\delta$. 
We have $0<\gamma<1/2$ for balanced primes, and hence $e$ is approximately $2gab$, implying $e\simeq N^{1-\gamma}$.

Cryptanalysis of common prime RSA has been extensively conducted by various previous works \cite{CTRSA:Hinek06,AC:JocMay06,DCC:SarMai13,AC:LZPL15,IS:MumtazL20}, focusing mainly on polynomial-time small private key attacks. 
We briefly summarize the results of attack bounds on $\delta$ as follows.
\begin{description}
	\item[Wiener's Attack.] 
	Wiener \cite{TIT:Wiener90} used a continued fraction attack to prove that given a public key $(N, e)$ with $\delta<1/4-\gamma/2$, one can factorize the common prime RSA modulus $N$ in polynomial time.
	\item[Hinek's Attack.] 
	Hinek \cite{CTRSA:Hinek06} conducted a systematical study on common prime RSA with two lattice-based attacks. To be concrete, $N$ can be factorized in polynomial time when $\delta<\gamma^2$ or $\delta<2\gamma/5$.
	\item[Jochemsz-May's Attack.] 
	Jochemsz and May \cite{AC:JocMay06} reevaluated the equation introduce by Hinek \cite{CTRSA:Hinek06} and solved it using different unknown variables. The bound on $\delta$ has been further improved to
	\[
	\delta<\frac{1}{4}\left(4+4 \gamma-\sqrt{13+20 \gamma+4 \gamma^2}\right).
	\]
	\item[Sarkar-Maitra's Attack.] 
	Sarkar and Maitra \cite{DCC:SarMai13} showed two improved lattice-based attacks. One is applicable for $\gamma\leq 0.051$ under a complicated condition, and another is applicable for $0.051<\gamma\leq 0.2087$ with the following bound.
	\[
	\delta<\frac{1}{4}-\frac{\gamma}{2}+\frac{\gamma^2}{2}.
	\] 
	\item[Lu et al.'s Attack.] Lu et al.\ \cite{AC:LZPL15} further analyzed the security of common prime RSA by solving simultaneous modular equations and obtained an improvement for $\gamma\geq 0.3872$. The bound on $\delta$ has been further improved to
	\[
	\delta<4\gamma^3,\ \ \gamma>1/4.
	\]
	\item[Mumtaz-Luo's Attack.] Mumtaz and Luo \cite{IS:MumtazL20} applied solving multivariate polynomial equations using a generalized lattice-based method for small private key attack on common prime RSA. They proposed an attack that works when 
	\[
	\delta<2-\gamma-\frac{1}{4} \sqrt{4 \gamma^2-28 \gamma+37}.
	\]
\end{description}

It is worth noting that Mumtaz-Luo's attack utilized incorrect or incomplete parameters, rendering their findings inapplicable. 
Our examination reveals the presence of incorrect parameters and incomplete conditions, prompting us to make the necessary corrections to their approach. 
Consequently, we present a refined security assessment of common prime RSA based on small private key attacks, offering a detailed illustration of its insecure and secure boundaries.

In this work, we adopt lattice-based integer polynomial solving strategy \cite{JC:Coppersmith97}, a technique commonly employed in previous cryptanalysis. 
We carefully check the relevant parameters associated with the specific solving condition in Mumtaz-Luo's attack and identify instances of inappropriate usage and missing explanations. 
Subsequently, we propose corrective measures, leading to the refinement of our small private key attack  through a discussion of an optimizing parameter. 
Our refined attack is effective for the following bound on $\delta$.
\[
\left\{
\begin{aligned}
	\delta &< \gamma+1-\frac{\sqrt{4 \gamma^2+20 \gamma+13}}{4}, & 0<\gamma\leq \frac{3}{10},\\
	\delta &< \frac{4\gamma+1}{11}, & \frac{3}{10}<\gamma<\frac{1}{2}.
\end{aligned}
\right.
\] 
Taking into account previous small private key attacks as well as our refined one, we present an illustrative security assessment of common prime RSA in Figure~\ref{fig:attacks}.
\begin{figure*}[ht]
	\centering
	\includegraphics[width=\textwidth]{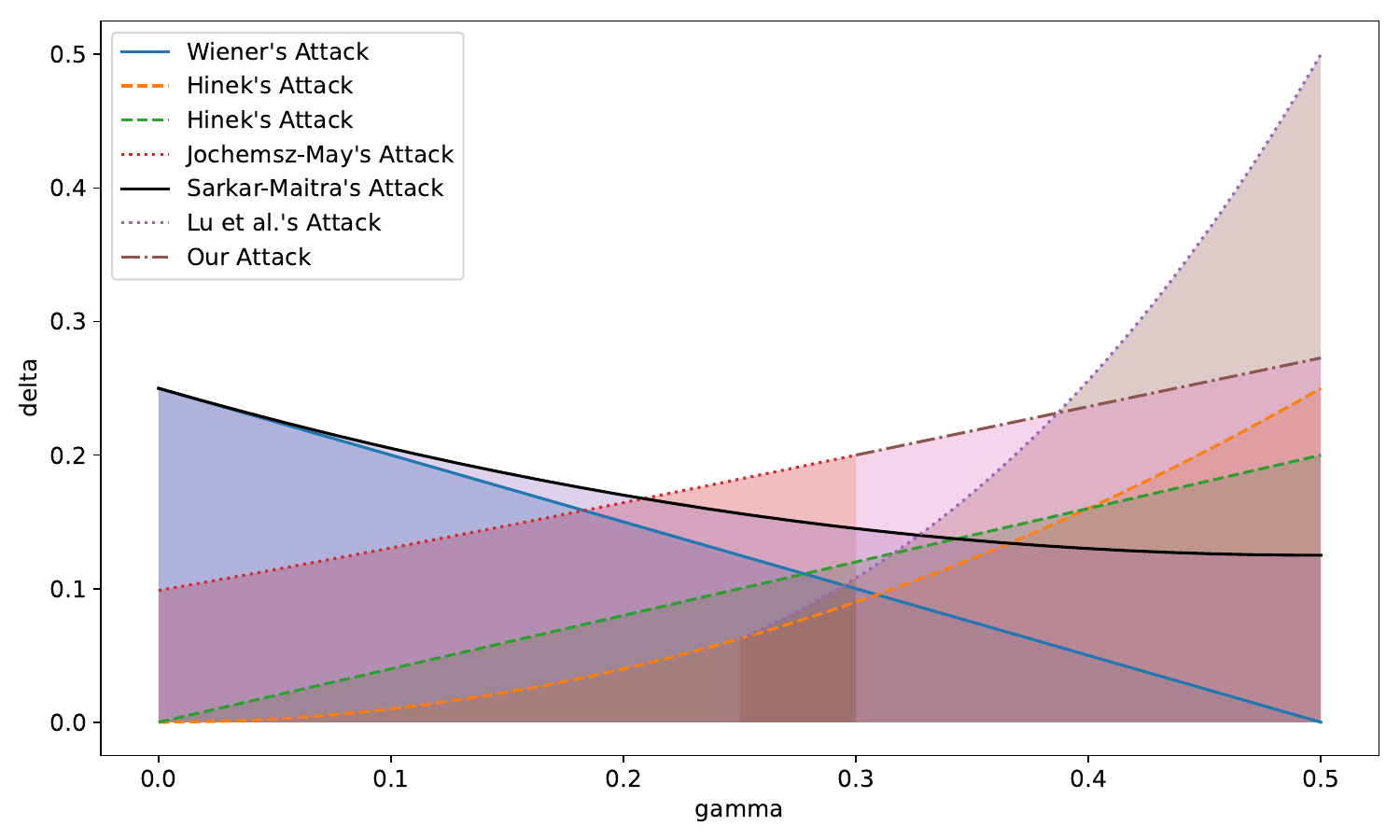}
	\caption{The shadows delineate the attack region on common prime RSA. These attack curves function as critical boundaries differentiating between secure and insecure common prime RSA settings.}
	\label{fig:attacks}
\end{figure*}

The rest of this paper is structured as follows. 
In Section~\ref{sect:preliminary}, we provide an introduction to lattice-based solving method, along with a solving condition for multivariate polynomial equations. 
We review Mumtaz-Luo's attack to point out several existing flaws, and present our corrections with a refined small private key attack on common prime RSA in Section~\ref{sect:review} and Section~\ref{sect:correction}.
We validate our proposed corrected attack through intensive numerical experiments in Section~\ref{sect:experiment}. 
Finally, we conclude the paper in Section~\ref{sect:conclusion}.

\section{Preliminaries}\label{sect:preliminary}

The fundamental concepts include the lattice reduction algorithm, notably the LLL algorithm by Lenstra, Lenstra, and Lovasz \cite{MA:LenLenLov82}, and Coppersmith's lattice-based method \cite{JC:Coppersmith97}, which was later refined as Howgrave-Graham's lemma \cite{IMA:HowgraveGraham97}. 
Additionally, a solving condition essential for finding the root of integer polynomials is introduced. 
For a more comprehensive understanding, interested readers can refer to \cite{PHD:May2003,May10}.

Let us begin by defining lattice $\mathcal{L}$ as the set of all integer linear combinations of linearly independent vectors $\vec{b}_1,\ldots,\vec{b}_{\omega}$. 
In other words, it can be expressed as $\mathcal{L}=\{\sum_{i=1}^{\omega}z_i\vec{b}_i:z_i\in\mathbb{Z}\}$. 
The lattice determinant, denoted as $\det(\mathcal{L})$, is calculated as $\sqrt{\det(BB^{\mathsf{T}})}$, where each $\vec{b}_i$ is considered as a row vector of the basis matrix $B$. 
When dealing with a full-rank lattice with $\omega=n$, the determinant becomes $\det(\mathcal{L})=\lvert \det(B) \rvert$.

The LLL algorithm \cite{MA:LenLenLov82} is a key mathematical tool for efficiently finding approximately short lattice vectors. 
As proven by \cite{PHD:May2003}, the LLL algorithm yields a reduced basis $(\vec{v}_1, \vec{v}_2, \ldots, \vec{v}_{\omega})$ with the following property.
\begin{lemma}\label{lemma:lll}
	The LLL algorithm outputs a reduced basis $(\vec{v}_1,\vec{v}_2,\ldots,\vec{v}_{\omega})$ of a given  $\omega$-dimensional lattice $\mathcal{L}$ satisfying
	\[
	\|\vec{v}_1\|,\ldots,\|\vec{v}_i\|\leq 2^{\frac{\omega(\omega-1)}{4(\omega+1-i)}}\det(\mathcal{L})^{\frac{1}{\omega+1-i}},\quad 1\leq i\leq \omega.
	\]
	Its time complexity is polynomial in $\omega$ and in logarithmic maximal input vector.
\end{lemma}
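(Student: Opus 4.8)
The plan is to derive the stated inequalities directly from the defining properties of an LLL-reduced basis, reproducing the classical argument of \cite{MA:LenLenLov82,PHD:May2003}. Write $(\vec{v}_1^*,\ldots,\vec{v}_\omega^*)$ for the Gram--Schmidt orthogonalization of $(\vec{v}_1,\ldots,\vec{v}_\omega)$, with coefficients $\mu_{j,\ell}$. First I would record the two facts defining reducedness: size reduction, $|\mu_{j,\ell}|\le\tfrac12$ for $\ell<j$, and the Lov\'asz condition with parameter $3/4$, which combined with size reduction gives $\|\vec{v}_j^*\|^2\ge\tfrac12\|\vec{v}_{j-1}^*\|^2$ for every $j$. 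Iterating the latter yields $\|\vec{v}_\ell^*\|^2\le 2^{j-\ell}\|\vec{v}_j^*\|^2$ whenever $\ell\le j$, while expanding $\vec{v}_j=\vec{v}_j^*+\sum_{\ell<j}\mu_{j,\ell}\vec{v}_\ell^*$ and invoking size reduction gives $\|\vec{v}_j\|^2\le 2^{j-1}\|\vec{v}_j^*\|^2$.

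Next I would use these to control each of $\vec{v}_1,\ldots,\vec{v}_i$ through the single quantity $\|\vec{v}_i^*\|$: for $j\le i$ one has $\|\vec{v}_j\|^2\le 2^{j-1}\|\vec{v}_j^*\|^2\le 2^{j-1}2^{i-j}\|\vec{v}_i^*\|^2=2^{i-1}\|\vec{v}_i^*\|^2$, and for $k\ge i$ one has $\|\vec{v}_i^*\|\le 2^{(k-i)/2}\|\vec{v}_k^*\|$. Routing through $\|\vec{v}_i^*\|$ is what connects the bound to the determinant, since $\det(\mathcal{L})=\prod_{k=1}^{\omega}\|\vec{v}_k^*\|$; because $\vec{v}_1,\ldots,\vec{v}_{i-1}$ span a nonzero integral sublattice (every lattice used in this paper is integral), the partial product $\prod_{k=1}^{i-1}\|\vec{v}_k^*\|$ is at least $1$, hence $\prod_{k=i}^{\omega}\|\vec{v}_k^*\|\le\det(\mathcal{L})$. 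Multiplying the bounds over $k=i,\ldots,\omega$ then yields, for each $j\le i$,
\[
\|\vec{v}_j\|^{\omega+1-i}\le 2^{(i-1)(\omega+1-i)/2}\prod_{k=i}^{\omega}2^{(k-i)/2}\|\vec{v}_k^*\|\le 2^{(\omega+1-i)(\omega+i-2)/4}\det(\mathcal{L}),
\]
so $\|\vec{v}_j\|\le 2^{(\omega+i-2)/4}\det(\mathcal{L})^{1/(\omega+1-i)}$ for all $j\le i$.

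Finally I would upgrade this to the claimed constant via the elementary inequality $(\omega+i-2)(\omega+1-i)\le\omega(\omega-1)$, which rearranges to $(i-1)(i-2)\ge 0$ and hence holds for every integer $i\ge 1$; this gives $\|\vec{v}_1\|,\ldots,\|\vec{v}_i\|\le 2^{\omega(\omega-1)/(4(\omega+1-i))}\det(\mathcal{L})^{1/(\omega+1-i)}$, and the complexity claim is just the standard polynomial running time of LLL. There is no genuine obstacle here, as the statement is classical; the only points needing care are folding the size-reduction term correctly into the Lov\'asz inequality to obtain the clean decay $\|\vec{v}_j^*\|^2\ge\tfrac12\|\vec{v}_{j-1}^*\|^2$, tracking the exponents when passing to $(\omega+1-i)$-th roots, and noting that the step $\prod_{k=1}^{i-1}\|\vec{v}_k^*\|\ge 1$ relies on integrality of $\mathcal{L}$ --- which is automatic for the lattice constructions appearing later in the paper.
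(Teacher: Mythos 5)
Your argument is correct, and it is exactly the classical proof the paper implicitly relies on: the paper does not prove Lemma~\ref{lemma:lll} itself but cites May's thesis, and your exponent bookkeeping checks out (indeed $(i-1)(\omega+1-i)/2+(\omega-i)(\omega-i+1)/4=(\omega+1-i)(\omega+i-2)/4$, and $(\omega+i-2)(\omega+1-i)\le\omega(\omega-1)$ reduces to $(i-1)(i-2)\ge 0$). Your observation that the step $\prod_{k=1}^{i-1}\|\vec{v}_k^*\|\ge 1$ needs integrality of $\mathcal{L}$ is well placed rather than pedantic: without it the stated bound is false for $i\ge 2$ (e.g.\ the LLL-reduced basis $(\epsilon,0),(0,1)$ with tiny $\epsilon$), and integrality is indeed automatic for the coefficient lattices constructed later in the paper.
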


An essential lemma introduced by Howgrave-Graham \cite{IMA:HowgraveGraham97} provides a principle for determining whether the root of a modular polynomial equation also corresponds to a root over the integers. 
This lemma concerns an integer polynomial $g(x_1,\ldots,x_n)=\sum a_{i_1,\ldots,i_n} x_1^{i_1}\cdots x_n^{i_n}$ and its norm $\|g(x_1,\ldots,x_n)\|:=\sqrt{\sum \lvert a_{i_1,\ldots,i_n} \rvert^2}$.
\begin{lemma}\label{lemma:hg}
	Let $g(x_1,\ldots,x_n)\in\mathbb{Z}[x_1,\ldots,x_n]$ be an integer polynomial, consisting of at most $\omega$ monomials.
	If the two following conditions are satisfied: 
	\begin{itemize}
		\item $g(x_1^{\star},\ldots,x_n^{\star})\equiv 0\pmod{R}$ for $\lvert x_1^{\star}\rvert\leq X_1,$ $\ldots,\lvert x_n^{\star}\rvert \leq X_n$,
		\item $\|g(x_1X_1,\ldots,x_nX_n)\|<R/\sqrt{\omega}$.
	\end{itemize}
	Then $g(x_1^{\star},\ldots,x_n^{\star})=0$ holds over the integers.
\end{lemma}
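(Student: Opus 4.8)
The plan is to show that the integer $g(x_1^{\star},\ldots,x_n^{\star})$ is too small in absolute value to be a nonzero multiple of $R$, so that the congruence in the first hypothesis forces it to vanish.

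First I would write $g(x_1,\ldots,x_n)=\sum a_{i_1,\ldots,i_n}x_1^{i_1}\cdots x_n^{i_n}$, where by assumption at most $\omega$ of the coefficients $a_{i_1,\ldots,i_n}$ are nonzero. Substituting the root and applying the triangle inequality gives
\[
\lvert g(x_1^{\star},\ldots,x_n^{\star})\rvert
=\Bigl\lvert \sum a_{i_1,\ldots,i_n}(x_1^{\star})^{i_1}\cdots(x_n^{\star})^{i_n}\Bigr\rvert
\le \sum \lvert a_{i_1,\ldots,i_n}\rvert\,\lvert x_1^{\star}\rvert^{i_1}\cdots\lvert x_n^{\star}\rvert^{i_n}.
\]
Next, using the size bounds $\lvert x_j^{\star}\rvert\le X_j$ termwise, the right-hand side is at most $\sum \lvert a_{i_1,\ldots,i_n}\rvert X_1^{i_1}\cdots X_n^{i_n}$, which is precisely the $\ell_1$-norm of the coefficient vector of the scaled polynomial $g(x_1X_1,\ldots,x_nX_n)$.

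Then I would invoke the elementary inequality $\|\vec{w}\|_1\le\sqrt{\omega}\,\|\vec{w}\|_2$, valid for any vector $\vec{w}$ with at most $\omega$ nonzero entries (a direct consequence of the Cauchy--Schwarz inequality), to conclude that $\sum \lvert a_{i_1,\ldots,i_n}\rvert X_1^{i_1}\cdots X_n^{i_n}\le\sqrt{\omega}\,\lVert g(x_1X_1,\ldots,x_nX_n)\rVert$. Combining this with the second hypothesis yields $\lvert g(x_1^{\star},\ldots,x_n^{\star})\rvert<\sqrt{\omega}\cdot R/\sqrt{\omega}=R$. Since $g(x_1^{\star},\ldots,x_n^{\star})$ is an integer divisible by $R$ and strictly smaller than $R$ in absolute value, it must equal $0$, as claimed.

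The argument is short and presents no real obstacle; the only point requiring care is bookkeeping, namely ensuring that the monomial count $\omega$ used in the norm comparison is the same $\omega$ that appears in the denominator of the bound $R/\sqrt{\omega}$ in the second hypothesis, which holds by the way the lemma is stated.
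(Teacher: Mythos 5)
Your proof is correct and is exactly the standard Howgrave-Graham argument (triangle inequality, $\ell_1$--$\ell_2$ comparison via Cauchy--Schwarz using the bound of $\omega$ monomials, then divisibility by $R$ with $\lvert g(x_1^{\star},\ldots,x_n^{\star})\rvert<R$ forcing the value to be zero), which is the same route as the cited proof the paper relies on. No issues.
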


Combining the LLL algorithm's outputs with Howgrave-Graham's lemma, we can efficiently solve modular/integer polynomial equations. 
Suppose that we have calculated the first $\ell$ many reduced vectors, the key to success lies in satisfying the condition $2^{\frac{\omega(\omega-1)}{4(\omega+1-\ell)}}\det(\mathcal{L})^{\frac{1}{\omega+1-\ell}}<R/\sqrt{\omega}$.
It reduces to $\det(\mathcal{L})<R^{\omega+1-\ell}2^{-\frac{\omega(\omega-1)}{4}}\omega^{-\frac{\omega+1-\ell}{2}}$.
We always have $\ell < \omega \ll R$ and hence it further leads to $\det(\mathcal{L}) < R^{\omega-\epsilon}$ with a tiny error term $\epsilon$. 
We finally derive the following asymptotic solving condition as  
\begin{equation}\label{condition:def}
	\det(\mathcal{L})<R^{\omega},
\end{equation}
which allows us to effectively solve given modular/integer polynomial equations.

The lattice-based solving strategy involves the following stages.
Initially, we generate a set of shift polynomials using the provided polynomial $f(x_1,\ldots,x_n)$ and upper bounds $X_1,\ldots,X_n$. 
These shift polynomials are specifically designed to share a common root modulo $R$. 
Subsequently, we create a lattice by converting the coefficient vectors of each shift polynomial $g_i(x_1X_1,\ldots,x_nX_n)$ into row vectors of a lattice basis matrix. 
Utilizing the LLL algorithm, we then obtain first few reduced vectors. 
These vectors are further transformed into integer polynomials $h_i(x_1,\ldots,x_n)$. 
Once we ensure that the resulting integer polynomials $h_i(x_1,\ldots,x_n)$ are algebraically independent,  the equation system can be effectively solved using trivial methods, thus extracting the desired root.

Several studies have focused on constructing an elegant lattice basis matrix with optimized solving conditions, including works such as \cite{EC:BM05,AC:JocMay06,ACISP:TakayasuK13,AC:LZPL15}.
In this paper, we adopt Jochemsz-May's strategy \cite{AC:JocMay06}, which involves creating a triangular basis matrix, where $\det(\mathcal{L})$ is easy to compute as multiplication of matrix's diagonal elements.
For a comprehensive and detailed explanation, refer to Section~\ref{sect:correction}.

To find the roots of a given trivariate integer polynomial in the specific form as
\[
f(x_1,x_2,x_3)=a_0+a_1 x_1+a_2 x_1^2+a_3 x_2+a_4 x_3+a_5 x_1 x_2+a_6 x_1 x_3+a_7 x_2 x_3,
\]
we should establish upper bounds $X_1$, $X_2$, and $X_3$ for the unknown variables $x_1$, $x_2$, and $x_3$. 
Additionally, $X_{\infty}$ is defined as the maximal individual term value related to the trivariate polynomial, which is given by
\begin{equation}\label{Xinfinity:def}
	X_{\infty} = \|f(x_1X_1,x_2X_2,x_3X_3)\|_\infty.
\end{equation}
To proceed, we introduce the parameter $R=X_{\infty} X_1^{2(s-1)+t} (X_2X_3)^{s-1}$, where $s$ is a positive integer, and $t=\tau s$ with $\tau\geq 0$ to be determined and optimized during the subsequent lattice construction.

We then construct a basis matrix using coefficient vectors of shift polynomials using two monomial sets $\mathcal{S}$ and $\mathcal{M}$. 
We define $f'(x_1,x_2,x_3)=a_0^{-1}f(x_1,x_2,x_3)\bmod{R}$ to set the constant term as $1$. 
The corresponding shift polynomials are given by
\begin{align*}
	& x_1^{i_1}x_2^{i_2}x_3^{i_3}f'(x_1,x_2,x_3)X_1^{2(s-1)+t-i_1} X_2^{s-1-i_2} X_3^{s-1-i_3}, & x_1^{i_1}x_2^{i_2}x_3^{i_3} \in \mathcal{S}, \\
	& Rx_1^{i_1}x_2^{i_2}x_3^{i_3}, & x_1^{i_1}x_2^{i_2}x_3^{i_3} \in \mathcal{M}\setminus\mathcal{S}.
\end{align*}

Upon straightforward and meticulous computations with the above parameters, we establish a parameterized solving condition. 
We obtain the first two vectors of the reduced basis under the proposed procedure and transform them into two polynomials $f_1(x_1,x_2,x_3)$ and $f_2(x_1,x_2,x_3)$ that share a common root over the integers. 
The extraction of the common root can be accomplished using resultant computation or  Gr{\"o}bner basis computation \cite{Grobner}. 
The running time primarily depends on computing the reduced lattice basis and recovering the desired root, both of which can be efficiently achieved in polynomial time with respect to the inputs.

The lattice-based solving strategy is a heuristic approach, as there is no assurance that the derived integer polynomials will always be algebraically independent. 
However, in the realm of lattice-based attacks, it is commonly assumed that the polynomials obtained through the LLL algorithm possess algebraic independence.

\section{Reviewing Mumtaz-Luo's Attack}\label{sect:review}

We begin by reviewing Mumtaz-Luo's analysis of common prime RSA and identify several incomplete or incorrect derivations.
It capitalizes on special property of common prime RSA, specifically, $ed=2gabk+1$ for an unknown integer $k$ from the key equation \eqref{eqn:key}. 
By substituting $2ga=p-1$ and $2gb=q-1$ into $ed=2gabk+1$, Mumtaz and Luo get
\begin{equation}\label{eqn:akbk}
	ed = (p-1)bk+1,\quad ed = (q-1)ak+1.
\end{equation}
Furthermore, it leads to
\[
ed-1+bk = pbk,\quad ed-1+ak = qak.
\]
Multiplying them together yields
\[
(ed-1+bk)(ed-1+ak)=pbk\cdot qak=abk^2N,
\]
which simplifies to
\begin{equation}\label{eqn:four}
	e^2 d^2+ed\left(ak+bk-2\right)-abk^2(N-1)-ak-bk+1=0.
\end{equation}

They merge the variable $k$ with $a$ and $b$, resulting in the following trivariate integer polynomial.
\begin{equation}\label{eqn:three}
	f(x, y, z)=e^2 x^2+e x(y+z-2)-(y+z-1)-(N-1) y z.
\end{equation}
Thus, it turns to finding the root $(x^{\star},y^{\star},z^{\star})=(d, ak, bk)$ of this trivariate integer polynomial.
The estimated upper bound values are 
\begin{equation}\label{eqn:upperbound}
	|x^{\star}|\leq X=N^{\delta},\ |y^{\star}|\leq Y=N^{\delta-\gamma+1/2},\ |z^{\star}|\leq Z=N^{\delta-\gamma+1/2}
\end{equation}
considering that $k\simeq N^\delta$ and $a,b\simeq N^{1/2-\gamma}$.

They employ generalized Coron's reformulation \cite{EC:Coron04,C:Coron07}, similar to Jochemsz-May's strategy \cite{AC:JocMay06}. 
The maximal coefficient $W=\|f(xX,yY,zZ)\|_\infty$ is defined for $f(x,y,z)$, but the specific value of $W$ is not explicitly mentioned.
Through generalized Coron's reformulation, they derive the following solving condition.
\begin{equation}\label{eqn:condition}
	X^{7+9 \tau+3 \tau^2}(Y Z)^{5+9 \tau / 2}<W^{3+3 \tau},
\end{equation} 
which coincides with the solving condition presented in \cite{AC:JocMay06}.
Subsequently, they use $X, Y, Z$, and $W$ and simplify to obtain an inequality involving $\delta$, $\gamma$, and an optimizing parameter $\tau$.
\begin{equation}\label{eqn:wrong}
	3 \delta \tau^2+(12 \delta+3 \gamma-9 / 2) \tau+11 \delta+4 \gamma-5<0.
\end{equation}
To maximize its left side, they let
\begin{equation}\label{eqn:tau}
	\tau=\frac{3-2\gamma-8\delta}{4\delta}.
\end{equation}
Substituting $\tau$ back into \eqref{eqn:wrong}, they obtain the inequality with respect to $\delta$ and $\gamma$.
\begin{equation*}
	\frac{-16 \delta^2-32 \delta \gamma+64 \delta-12 \gamma^2+36 \gamma-27}{16 \delta}<0 .
\end{equation*}
Thus, they obtain the following bound, omitting the tiny term $\epsilon$ presented in the original bound \cite[Formula~(12)]{IS:MumtazL20}.
\begin{equation}\label{eqn:delta1}
	\delta<2-\gamma-\frac{1}{4} \sqrt{4 \gamma^2-28 \gamma+37}.
\end{equation}

Mumtaz and Luo recover the root $(x^{\star},y^{\star},z^{\star})=(d, ak, bk)$ through the resultant computation. 
Thus, they claim that the factorization of common prime modulus is done in polynomial time.
However, their derivation from \eqref{eqn:condition} to \eqref{eqn:wrong} is not smooth and intuitive, as $W$ is not explicitly given during their analysis. 
Conversely, we aim to discover the value of $W$ they use through the following inverse computation. 
Assuming that the condition \eqref{eqn:wrong} is correctly derived for $X=N^{\delta}, Y=Z=N^{\delta-\gamma+1/2}$ given in \eqref{eqn:upperbound} and $W=N^{\xi}$ with a fixed $\xi$ to be recovered, we have
\[
N^{(7+9 \tau+3 \tau^2)\delta}N^{2(5+9 \tau / 2)(\delta-\gamma+1/2)}<N^{(3+3 \tau)\xi}.
\]
Simplifying it gives us
\[
(7+9 \tau+3 \tau^2)\delta+(5+9 \tau / 2)(2\delta-2\gamma+1)<(3+3 \tau)\xi.
\]
After rearrangement, we have
\begin{equation}\label{eqn:right}
	3 \delta \tau^2+(18 \delta-9 \gamma-3\xi+9 / 2) \tau+17 \delta-10 \gamma-3\xi+5<0.
\end{equation}
Comparing the corresponding coefficients in \eqref{eqn:wrong} and \eqref{eqn:right}, we must ensure 
\[
\left\{
\begin{aligned}
	18 \delta-9 \gamma-3\xi+9 / 2 & =12 \delta+3 \gamma-9 / 2,\\
	17 \delta-10 \gamma-3\xi+5 & =11 \delta+4 \gamma-5.
\end{aligned}
\right.
\]
Solving the above simultaneous equations, we encounter a contradiction about $\xi$ as follows.
\[
\left\{
\begin{aligned}
	\xi & =2\delta-4\gamma+3,\\
	\xi & =2\delta-14\gamma/3+10/3.
\end{aligned}
\right.
\]

Even if we still assume that condition \eqref{eqn:wrong} is correct, the derived bound on $\delta$ is not accurate. 
Mumtaz-Luo's bound on $\delta$ is presented as
\[
\delta<2-\gamma-\frac{1}{4} \sqrt{4 \gamma^2-28 \gamma+37}.
\]
However, it overestimates the capability of the small private key attack since they ignore a crucial prerequisite, i.e., $\tau\geq 0$, used in the lattice-based method. 
The optimizing parameter is set to $\tau=(3-2\gamma-8\delta)/(4\delta)$ according to \eqref{eqn:tau}. 
Hence, we must ensure that
\[
\frac{3-2\gamma-8\delta}{4\delta}\geq 0,
\]
which results in $3-2\gamma-8\delta\geq 0$.
Therefore, we obtain another constrained bound on $\delta$.
\begin{equation}\label{eqn:delta2}
	\delta\leq \frac{3-2\gamma}{8}.
\end{equation}

Taking both \eqref{eqn:delta1} and \eqref{eqn:delta2} into consideration, we obtain an accurate bound on $\delta$ as follows.
\[
\delta < 2-\gamma-\frac{1}{4} \sqrt{4 \gamma^2-28 \gamma+37},\quad \mathrm{and} \quad \delta \leq \frac{3-2\gamma}{8}.
\]
This leads to  
\[
\delta\leq \frac{3-2\gamma}{8}.
\]
However, this result is actually a weaker bound on $\delta$, as Mumtaz and Luo pick $\tau=(3-2\gamma-8\delta)/(4\delta)$, which can be further optimized. 

To provide a better attack, we calculate a stronger bound on $\delta$ using another simplified approach. 
It can be inferred from \eqref{eqn:wrong} that
\[
(3\tau^2+12\tau+11)\delta+3\gamma\tau-9\tau/2+4\gamma-5<0.
\]
Therefore, as $\tau\geq 0$, we have 
\[
\delta<\frac{9\tau/2-3\gamma\tau-4\gamma+5}{3\tau^2+12\tau+11}.
\]
Let $\lambda(\tau):={(9\tau/2-3\gamma\tau-4\gamma+5)}/{(3\tau^2+12\tau+11)}$ for a given $0<\gamma<1/2$.
Its derivative with respect to $\tau$ is
\[
\frac{\partial\, \lambda(\tau)}{\partial\, \tau} = \frac{3 \left((6\gamma- 9) \tau^2 +  (16\gamma- 20) \tau +10\gamma-7\right)}{2 \left(3 \tau^2+ 12 \tau + 11\right)^2}.
\]

Because $0<\gamma<1/2$, we have $6\gamma- 9<0$, $16\gamma- 20<0$, $10\gamma-7<0$ and hence the numerator of the above derivative is negative.
Thus, $\lambda(\tau)$ is decreasing in the domain of $\tau\geq 0$, and its maximum is taken at $\tau = 0$.
Therefore, we derive the bound $(5-4\gamma)/11$ when maximizing $\lambda(\tau)$ by setting $\tau=0$.
Through a direct comparison, it can be easily checked that $(5-4\gamma)/11 > (3-2\gamma)/8$.
Thus, under Mumtaz-Luo's analysis idea, the exact bound on $\delta$ is 
\[
\delta<\frac{5-4\gamma}{11}.
\]

To conclude, Mumtaz-Luo's attack \cite{IS:MumtazL20} suffers from several fatal flaws, rendering their attack result incorrect and inapplicable. 
The summary of these flaws is as follows.
\begin{description}
	\item[Repetitive Approach.] 
	They use generalized Coron's reformulation to solve a trivariate integer polynomial, but the specific monomial form and its relevant unknown variables are the same as those analyzed in Jochemsz-May's attack \cite[Section~5.2]{AC:JocMay06}. 
	\item[Incorrect Parameter $W$.] 
	The parameter $W$, referring to the maximal coefficient of $f(xX,yY,zZ)$ with upper bounds $X,Y,Z$, is not explicitly provided. Moreover, the value of $W$ implied by their analysis derivation contradicts itself, leading to uncertainty and inconsistency in their analysis.
	\item[Incorrect Bound on $\delta$.] 
	They ignore a crucial prerequisite described in the lattice-based method, where the optimizing parameter $\tau$ should satisfy $\tau\geq 0$. This oversight results in an incorrect bound on $\delta$, and the theoretical bound values given in Theorem~1 do not agree with the ones presented in Table~2.
	\item[Incomplete Factorization.] 
	Their analysis lacks how to factorize the given common prime RSA modulus if the root $(d,ak,bk)$ is recovered. This omission renders the proof of Theorem~1 incomplete and leaves their attack without a crucial step.
\end{description}
These mentioned flaws in Mumtaz-Luo's analysis raise significant concerns about the reliability and validity of their proposed attack. 
Addressing these issues is essential before considering the effectiveness of their approach in practical attack scenarios.

\section{Corrections to Previous Cryptanalysis}\label{sect:correction}

In light of the flaws in Mumtaz-Luo's analysis \cite{IS:MumtazL20}, we present a refined small private key attack and make corrections to their previous findings.
\begin{proposition}\label{prop:main}
	Given a common prime RSA modulus $N=pq$ for balanced primes $p,q$ and $(p-1)/2,\ (q-1)/2$ having a prime $g\simeq N^{\gamma}$, and $e\simeq N^{1-\gamma},\ d\simeq N^{\delta}$ satisfying $ed\equiv 1 \bmod{\mathrm{lcm}(p-1,q-1)}$, then one can factorize the given common prime RSA modulus in polynomial time if
	\begin{equation}\label{delta:main}
		\left\{
		\begin{aligned}
			\delta &< \gamma+1-\frac{\sqrt{4 \gamma^2+20 \gamma+13}}{4}, & 0<\gamma\leq \frac{3}{10},\\
			\delta &< \frac{4\gamma+1}{11}, & \frac{3}{10}<\gamma<\frac{1}{2}.
		\end{aligned}
		\right.
	\end{equation}
\end{proposition}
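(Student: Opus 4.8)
The plan is to carry out the lattice-based attack on the trivariate integer polynomial $f(x_1,x_2,x_3)$ from \eqref{eqn:three} (equivalently $f(x,y,z)$ in the notation of Section~\ref{sect:review}) following Jochemsz-May's strategy as set up in Section~\ref{sect:preliminary}, being careful this time to pin down the maximal coefficient $W=X_\infty=\|f(x_1X_1,x_2X_2,x_3X_3)\|_\infty$ explicitly and to respect the constraint $\tau\ge 0$. First I would record the upper bounds $X=N^\delta$, $Y=Z=N^{\delta-\gamma+1/2}$ from \eqref{eqn:upperbound} and compute $W$ honestly: the dominant terms of $f(xX,yY,zZ)$ are $e^2X^2$ of size $N^{2(1-\gamma)+2\delta}$ and $(N-1)YZ$ of size $N^{1+2(\delta-\gamma+1/2)}=N^{2\delta-2\gamma+2}$; since $2(1-\gamma)+2\delta = 2\delta-2\gamma+2$, these two coincide, so $W\simeq N^{2\delta-2\gamma+2}$. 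This is precisely the value Mumtaz and Luo failed to state, and fixing it is what makes the subsequent inequality come out consistently.

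Next I would set up the monomial sets $\mathcal{S}$ and $\mathcal{M}$ and the shift polynomials exactly as in the displayed construction in Section~\ref{sect:preliminary}, with $R=X_\infty X_1^{2(s-1)+t}(X_2X_3)^{s-1}$ and $t=\tau s$, build the triangular basis matrix, and read off $\det(\mathcal{L})$ as the product of diagonal entries and $\omega=|\mathcal{M}|$ as the dimension. These are the same combinatorial sums that appear in \cite[Section~5.2]{AC:JocMay06} for a polynomial of this monomial shape, so I would quote those counts: asymptotically (in $s$) one gets exponents of $X_1$, $X_2X_3$, $W$, and $R$ that produce, after dividing through by the common power of $s^3$ and letting $s\to\infty$, the solving condition $X^{7+9\tau+3\tau^2}(YZ)^{5+9\tau/2}<W^{3+3\tau}$ of \eqref{eqn:condition}. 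Substituting $X=N^\delta$, $Y=Z=N^{\delta-\gamma+1/2}$ and the now-correct $W=N^{2\delta-2\gamma+2}$ and taking logarithms base $N$, the inequality becomes $(7+9\tau+3\tau^2)\delta + (5+9\tau/2)(2\delta-2\gamma+1) < (3+3\tau)(2\delta-2\gamma+2)$, which I would rearrange into a quadratic in $\tau$ of the form $3\delta\tau^2 + (\text{linear in }\delta,\gamma)\tau + (\text{const in }\delta,\gamma) < 0$ — the corrected analogue of \eqref{eqn:wrong}.

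The final step is the optimization over $\tau\ge 0$, and this is where the two-branch bound \eqref{delta:main} comes from. Writing the corrected inequality as $(3\tau^2+12\tau+11)\delta < (\text{affine function of }\tau\text{ with }\gamma\text{-dependent slope})$, I would isolate $\delta < \mu(\tau)$ and study $\mu$ on $[0,\infty)$. Its derivative's numerator is a quadratic in $\tau$; unlike the function $\lambda(\tau)$ analyzed in Section~\ref{sect:review}, here the leading coefficient and the constant term change sign as $\gamma$ crosses $3/10$, so for $\gamma\le 3/10$ the maximum of $\mu$ is attained at an interior critical point $\tau^\star>0$ (yielding, after simplification, $\delta < \gamma+1-\frac14\sqrt{4\gamma^2+20\gamma+13}$), whereas for $\gamma>3/10$ the function is decreasing on $[0,\infty)$ and the optimum is the boundary value $\tau=0$ (yielding $\delta < (4\gamma+1)/11$). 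I would verify the crossover by checking that the interior critical point $\tau^\star$ is nonnegative exactly when $\gamma\le 3/10$ and that the two branch formulas agree at $\gamma=3/10$. The main obstacle I anticipate is the bookkeeping in the two cases of the optimization — correctly determining the sign of the derivative's numerator as a function of both $\tau$ and $\gamma$, locating $\tau^\star$, and simplifying $\mu(\tau^\star)$ down to the stated radical — together with double-checking the asymptotic monomial counts that feed \eqref{eqn:condition}; everything else is routine substitution. Finally, once the LLL output gives two algebraically independent integer polynomials with common root $(d,ak,bk)$, I would extract $ak$ and $bk$ by resultants/Gröbner basis, recover $k=\gcd$-type information and thence $ab$ and $a+b$ from \eqref{eqn:akbk}, solve for $\{a,b\}$, and finally obtain $p=2ga+1$, $q=2gb+1$ (with $g$ recovered from $e,d,k$), completing the factorization in polynomial time under the standard algebraic-independence heuristic.
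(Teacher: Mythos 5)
Your proposal is correct and follows essentially the same route as the paper's proof: the corrected maximal coefficient $W=X_\infty\simeq N^{2\delta-2\gamma+2}$, the Jochemsz--May construction whose asymptotic counts reproduce exactly condition \eqref{eqn:condition}, the same quadratic inequality $3\delta\tau^2+(12\delta-3\gamma-3/2)\tau+11\delta-4\gamma-1<0$, the same optimization over $\tau\geq 0$ with crossover at $\gamma=3/10$ (interior optimum giving $\gamma+1-\tfrac{1}{4}\sqrt{4\gamma^2+20\gamma+13}$, boundary $\tau=0$ giving $(4\gamma+1)/11$), and the same $\gcd$-based recovery of $k,a,b,g$ to factor $N$. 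One cosmetic slip: in the numerator of $\mu'(\tau)$ only the constant term $9/2-15\gamma$ changes sign at $\gamma=3/10$ while the leading coefficient $-(9\gamma+9/2)$ stays negative, but this does not affect your (correct) conclusion about where the maximum is attained.
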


\begin{proof}
	To refine the small private key attack, we focus on solving the trivariate integer polynomial
	\[
	f(x_1, x_2, x_3)=e^2 x_1^2+e x_1(x_2+x_3-2)-(x_2+x_3-1)-(N-1) x_2 x_3.
	\]
	Following the lattice-based solving strategy \cite{AC:JocMay06}, it can be rewritten as
	\begin{equation}\label{eqn:solve}
		f(x_1, x_2, x_3)=1-2e x_1+e^2 x_1^2-x_2-x_3+e x_1x_2+e x_1x_3+(1-N) x_2 x_3
	\end{equation}
	for $a_0=1,a_1=-2e,a_2=e^2,a_3=-1,a_4=-1,a_5=e,a_6=e$ and $a_7=1-N$.
	We aim to find the desired root $(x_1^{\star}, x_2^{\star}, x_3^{\star})=(d, ak, bk)$.
	
	Using $e\simeq N^{1-\gamma}$ and $d\simeq N^{\delta}$, we estimate $ak\simeq N^{\delta-\gamma+1/2}$ and $bk\simeq N^{\delta-\gamma+1/2}$ based on \eqref{eqn:akbk}.
	The upper bounds $X_i$ are identical to the previous ones \eqref{eqn:upperbound}.  
	\[
	X_1=N^{\delta},\ X_2=N^{\delta-\gamma+1/2},\ X_3=N^{\delta-\gamma+1/2}.
	\]
	The maximal term $X_{\infty}$ is used in the solving condition, and it can be computed as 
	\[
	\begin{aligned}
		X_{\infty} 
		& =\|f(x_1X_1,x_2X_2,x_2X_2)\|_\infty \\
		&= \max\left\{ \lvert a_0\rvert, \lvert a_1\rvert X_1, \lvert a_2\rvert X_1^2, \lvert a_3\rvert X_2, \lvert a_4\rvert X_3, \lvert a_5\rvert X_1 X_2, \lvert a_6\rvert X_1 X_3, \lvert a_7\rvert X_2 X_3 \right\} \\
		& = \max\left\{ N^{\delta-\gamma+1}, N^{2\delta-2\gamma+2}, N^{\delta-\gamma+1/2}, N^{2\delta-2\gamma+3/2}, N^{2\delta-2\gamma+2} \right\} \\
		& = N^{2\delta-2\gamma+2}.
	\end{aligned}
	\]
	
	We use one extra shift of $x_1$ in the lattice-based method as it is smaller than $x_2$ and $x_3$.
	We construct two monomial sets $\mathcal{S}$ and $\mathcal{M}$ for $s$ and $t=\tau s$ as follows.
	\begin{align*}
		\mathcal{S} = \bigcup_{\substack{0\leq j\leq t}}
		\big\{ & x_1^{i_1+j}x_2^{i_2}x_3^{i_3}:
		x_1^{i_1}x_2^{i_2}x_3^{i_3} \in f^{s-1}\big\},\\
		\mathcal{M} = \bigcup_{\substack{0\leq j\leq t}}
		\big\{ & x_1^{i_1+j}x_2^{i_2}x_3^{i_3}:
		x_1^{i_1}x_2^{i_2}x_3^{i_3} \in f^{s}\big\}.
	\end{align*}
	We know the relationship between monomials $x_1^{i_1}x_2^{i_2}x_3^{i_3}$ in $\mathcal{S}$ and $\mathcal{M}$ and the corresponding exponents $i_1,i_2,i_3$ via the expansion of $f^{s-1}$ and $f^s$.
	\begin{align*}
		x_1^{i_1}x_2^{i_2}x_3^{i_3}\in \mathcal{S} & \Leftrightarrow
		\left\{
		\begin{aligned}
			i_2 & = 0,\ldots,s-1,\\
			i_3 & = 0,\ldots,s-1,\\
			i_1 & = 0,\ldots,2(s-1)-i_2-i_3+t.
		\end{aligned}
		\right.\\
		x_1^{i_1}x_2^{i_2}x_3^{i_3}\in \mathcal{M} & \Leftrightarrow
		\left\{
		\begin{aligned}
			i_2 & = 0,\ldots,s,\\
			i_3 & = 0,\ldots,s,\\
			i_1 & = 0,\ldots,2s-i_2-i_3+t.
		\end{aligned}
		\right.
	\end{align*}
	
	The constant term of $f(x_1,x_2,x_3)$ is required to be $1$ and fortunately $a_0$ is exactly $1$.
	Thus, We define the shift polynomials $g_{[i_1,i_2,i_3]}$ according to distinct monomial $x_1^{i_1}x_2^{i_2}x_3^{i_3}$ in $\mathcal{S}$ and $\mathcal{M}$ for $R=X_{\infty} X_1^{2(s-1)+t} (X_2X_3)^{s-1}$ as follows.
	\begin{align*}
		& x_1^{i_1}x_2^{i_2}x_3^{i_3}f(x_1,x_2,x_3)X_1^{2(s-1)+t-i_1} X_2^{s-1-i_2} X_3^{s-1-i_3},\quad & x_1^{i_1}x_2^{i_2}x_3^{i_3}\in \mathcal{S},\\
		& R x_1^{i_1}x_2^{i_2}x_3^{i_3} ,\quad & x_1^{i_1}x_2^{i_2}x_3^{i_3}\in \mathcal{M}\setminus \mathcal{S}.
	\end{align*}
	
	The coefficient vectors of $g_{[i_1,i_2,i_3]}$, where $x_iX_i$ is substituted for each $x_i$, generate a lattice $\mathcal{L}$.
	We need to compute the lattice determinant $\det(\mathcal{L})$, and 
	the diagonal elements of $g_{[i_1,i_2,i_3]}(x_1X_1,x_2X_2,x_3X_3)$
	are $X_1^{2(s-1)+t} (X_2X_3)^{s-1}=R/X_{\infty}$ and $RX_1^{i_1}X_2^{i_2}X_3^{i_3}$, respectively.
	So, we obtain
	\[
	\left(R/X_{\infty}\right)^{s_0}R^{s_R}X_1^{s_1}X_2^{s_2}X_3^{s_3}<R^{\omega},
	\]
	where $s_0=\lvert \mathcal{S} \rvert$, 
	$s_j=\sum_{x_1^{i_1}x_2^{i_2}x_3^{i_3}\in \mathcal{M}\setminus \mathcal{S}} i_j$,
	$s_R=\lvert \mathcal{M}\setminus \mathcal{S} \rvert$ and $\omega=\lvert S_R \rvert$. 
	Moreover, $\omega=\lvert \mathcal{M} \rvert=\lvert \mathcal{S} \rvert+\lvert \mathcal{M}\setminus \mathcal{S} \rvert=s_0+s_R$.
	Therefore, $R$ in the left and right sides cancel each other out, and we obtain
	\begin{equation}\label{condition:general}
		X_1^{s_1}X_2^{s_2}X_3^{s_3}<X_{\infty}^{s_0},
	\end{equation}
	for $s_j=\sum_{x_1^{i_1}x_2^{i_2}x_3^{i_3}\in \mathcal{M}\setminus \mathcal{S}} i_j$ and $s_0=\lvert \mathcal{S} \rvert$.
	By calculating $s_j$ for $j=0, 1, 2, 3$ through the above deduction, we obtain
	\begin{equation*}
		\begin{aligned}
			s_0 & = \lvert \mathcal{S} \rvert = \sum_{i_2=0}^{s-1}\sum_{i_3=0}^{s-1}\sum_{i_1=0}^{2(s-1)-i_2-i_3+t} 1 = s^3+s^2 t+o(s^3),\\
			s_1 & = \sum_{x_1^{i_1}x_2^{i_2}x_3^{i_3}\in \mathcal{M}\setminus \mathcal{S}} i_1 = \frac{7s^3}{3}+3s^2 t+s t^2+o(s^3),\\
			s_2 & = \sum_{x_1^{i_1}x_2^{i_2}x_3^{i_3}\in \mathcal{M}\setminus \mathcal{S}} i_2 = \frac{5s^3}{3}+\frac{3s^2 t}{2}+o(s^3),\\
			s_3 & = \sum_{x_1^{i_1}x_2^{i_2}x_3^{i_3}\in \mathcal{M}\setminus \mathcal{S}} i_3 = \frac{5s^3}{3}+\frac{3s^2 t}{2}+o(s^3).
		\end{aligned}	
	\end{equation*}
	Using $t=\tau s$ and skipping lower terms $o(s^3)$ gives us
	\begin{equation}\label{sum:general}
		s_0=(1+\tau)s^3,\ s_1=\left(\frac{7}{3}+3\tau+\tau^2\right)s^3,\ s_2=s_3=\left(\frac{5}{3}+\frac{3\tau}{2}\right)s^3.
	\end{equation}
	
	We substitute the values of $X_1,X_2,X_3$, $X_{\infty}$, $s_j$ for $j=0,1,2,3$ into \eqref{condition:general}.
	This results in the inequality
	\[
	\delta\left(\frac{7}{3}+3\tau+\tau^2\right)+2\left(\delta-\gamma+\frac{1}{2}\right)\left(\frac{5}{3}+\frac{3\tau}{2}\right)<(2\delta-2\gamma+2)(1+\tau).
	\]
	Further simplifying, we get
	\[
	\delta\tau^2 + \left(4\delta-\gamma-\frac{1}{2}\right)\tau + \frac{11}{3}\delta-\frac{4\gamma}{3}-\frac{1}{3}<0,
	\]
	which can be reduced to
	\[
	6\delta\tau^2 + \left(24\delta-6\gamma-3\right)\tau + 22\delta-8\gamma-2<0.
	\]
	
	To minimize the left side, we find the optimizing value of $\tau=(2\gamma-8\delta+1)/(4\delta)$.
	Therefore, we get
	\[
	-\frac{3(2\gamma-8\delta+1)^2}{8\delta} + 22\delta-8\gamma-2 < 0.
	\]
	Simplifying it, we obtain
	\[
	16\delta^2-32(\gamma+1)\delta+3(2\gamma+1)^2>0.
	\]
	From this inequality, we can deduce the bound on $\delta$.
	\begin{equation}\label{delta:general}
		\delta<\gamma+1-\frac{\sqrt{4 \gamma^2+20 \gamma+13}}{4}.
	\end{equation}
	Additionally, we should ensure that $\tau=(2\gamma-8\delta+1)/(4\delta)$ meets its prerequisite $\tau\geq 0$, which implies $\delta\leq (2\gamma+1)/8$.
	Comparing it with \eqref{delta:general}, we deduce that \eqref{delta:general} holds for $0<\gamma\leq 3/10$ in our proposed analysis.
	
	Therefore, we determine the bound on $\delta$ for $3/10<\tau<1/2$ and we should set $\tau=0$. 
	It directly indicates
	\[
	11\delta-4\gamma-1<0,
	\] 
	which simplifies to
	\begin{equation}\label{delta:other}
		\delta<\frac{4\gamma+1}{11}.
	\end{equation}
	Gathering \eqref{delta:general} and \eqref{delta:other} together, we finally derive the following result. 
	\[
	\left\{
	\begin{aligned}
		\delta &< \gamma+1-\frac{\sqrt{4 \gamma^2+20 \gamma+13}}{4}, & 0<\gamma\leq \frac{3}{10},\\
		\delta &< \frac{4\gamma+1}{11}, & \frac{3}{10}<\gamma<\frac{1}{2}.
	\end{aligned}
	\right.
	\]
	
	Once we have obtained more integer polynomials apart from $f(x_1,x_2,x_3)$, and they share the common root $(x_1^{\star},x_2^{\star},x_3^{\star})=(d,ak,bk)$ over the integers, we can proceed to extract $d$, $ak$, and $bk$ to factorize $N$.
	To factorize the given common prime RSA modulus $N$ using the obtained values $d$, $ak$, and $bk$, we makes use of the fact that $\gcd(a,b)=1$. 
	So, $k=\gcd(ak,bk)$ is first computed by $k=\gcd(x_2^{\star},x_3^{\star})$.
	Then, we know $a=x_2^{\star}/k$ and $b=x_3^{\star}/k$.
	Next, we apply $ed=2gabk+1$ with known $a$, $b$, and $k$ to compute $g=(ed-1)/(2abk)$.
	Finally, using $a$, $b$, and $g$, we can find $p=2ga+1$ and $q=2gb+1$. 
	These values of $p$ and $q$ give us the factorization of $N$.
	The root extraction and factorization can be done in time that is a polynomial regarding $\log N$ and $s$.
\end{proof}

\begin{remark}
	In addition to addressing the flaws in Mumtaz-Luo's analysis, we have further enhanced Jochemsz-May's attack by refining the bound on $\delta$ for $3/10<\gamma<1/2$. 
	It is important to note that Mumtaz-Luo's analysis essentially reproduces Jochemsz-May's attack, albeit with incorrect and incomplete derivation.
\end{remark}

\section{Validating Experiments}\label{sect:experiment}

The experimental results are provided to demonstrate the performance of the proposed small private key attack based on Proposition~\ref{prop:main}. 
The experiments were carried out on a laptop running a 64-bit Windows~10 operating system with Ubuntu~22.04 installed on WSL~2.
We utilized SageMath \cite{SageMath2023} for conducting the experiments, and the parameters for generating the common prime RSA instances were randomly chosen.

Initially, we generated a common prime RSA modulus $N$ with bit-size denoted by $\ell$ using a predetermined $\gamma$ value. 
Subsequently, we generated the private exponent $d$ with a predetermined bit-size in each experiment. 
To derive the public exponent $e$, we utilized $ed\equiv 1\pmod{\mathrm{lcm}(p-1, q-1)}$. 
Additionally, the bit-size of $d$ was gradually increased to achieve a larger $\delta$ for performing a successful small private key attack.

To execute the proposed attack, we carefully constructed a lattice via suitable integers $s$ and $t$. 
Table~\ref{table:experiment} provided the experimental results for our proposed small private key attack. 
The column $\gamma \ell$ represents the bit-size of $g$ in the generated common prime RSA instances, while $\ell_e$ denotes the bit-size of $e$. 
The column $\delta_t \ell$ provides the theoretical bound on $d$, and the corresponding experimental result is presented in the column $\delta_e \ell$.
The column $\textsf{AR}$ indicates the achieving rate $\delta_e/\delta_t$ of our experimental bound in estimating the distance from the theoretical one.
The lattice settings are controlled by $s$ and $t$, with the lattice dimension specified in the $\omega$ column. 
The time consumption of our experiments is recorded in the \textsf{Time} column, measured in seconds.

\begin{table}[ht]
	\begin{center}
		\begin{minipage}{\textwidth}
			\footnotesize
			\caption{Experimental results of our proposed attack on common prime RSA}\label{table:experiment}
			\begin{tabular*}{\textwidth}{@{\extracolsep{\fill}}cccccccccc@{\extracolsep{\fill}}}
				\toprule
				$\ell$ & $\gamma \ell$ & $\ell_e$ & $\delta_t \ell$ & $\delta_e \ell$ & \textsf{AR} & $s$ & $t$ & $\omega$ & \textsf{Time}          \\ \midrule
				1024       & 205               & 818        & 168                 & 118                 & $70.2\%$    & 2   & 1   & 36       & \SI{1.238}{\second}    \\
				1024       & 256               & 765        & 186                 & 134                 & $72.0\%$    & 2   & 1   & 36       & \SI{1.151}{\second}    \\
				1024       & 307               & 716        & 205                 & 145                 & $70.7\%$    & 2   & 0   & 27       & \SI{0.339}{\second}    \\
				1024       & 359               & 663        & 223                 & 159                 & $71.3\%$    & 2   & 0   & 27       & \SI{0.438}{\second}    \\
				1024       & 410               & 612        & 242                 & 189                 & $78.1\%$    & 3   & 0   & 64       & \SI{20.535}{\second}   \\
				1024       & 461               & 562        & 261                 & 209                 & $80.1\%$    & 4   & 0   & 125      & \SI{1168.909}{\second} \\ \midrule
				2048       & 408               & 1639       & 337                 & 236                 & $70.0\%$    & 2   & 1   & 36       & \SI{3.637}{\second}    \\
				2048       & 512               & 1535       & 373                 & 268                 & $71.8\%$    & 2   & 1   & 36       & \SI{3.295}{\second}    \\
				2048       & 615               & 1432       & 410                 & 295                 & $71.9\%$    & 2   & 1   & 36       & \SI{3.126}{\second}    \\
				2048       & 716               & 1331       & 447                 & 349                 & $78.1\%$    & 3   & 0   & 64       & \SI{81.463}{\second}   \\
				2048       & 820               & 1225       & 484                 & 397                 & $82.0\%$    & 4   & 0   & 125      & \SI{4728.206}{\second} \\
				2048       & 920               & 1123       & 521                 & 445                 & $85.4\%$    & 4   & 0   & 125      & \SI{3816.764}{\second} \\ \bottomrule
			\end{tabular*}
		\end{minipage}
	\end{center}
\end{table}

Throughout each experiment, we collected sufficient integer polynomials that met the solvable requirements after running the LLL algorithm. 
As indicated in Table~\ref{table:experiment}, the running time increases while the lattice dimension becomes higher or the given modulus gets larger. 
After obtaining the integer polynomial equations with a shared root, we successfully recovered $(x_1^{\star},x_2^{\star},x_3^{\star})$ in attacks on generated instances. 
Consequently, we retrieved $d$, $ak$, and $bk$, enabling us to factorize $N$ using the root extraction approach. 
However, the experimental results fell short of reaching the theoretical bound, likely due to limited computing resources. 
It is evident that the attack performance improves with the use of a lattice of increasing dimension.
We believe that practical attack results can be further optimized by constructing lattices with higher dimensions.

\section{Conclusion}\label{sect:conclusion}

We thoroughly examine Mumtaz-Luo's attack on common prime RSA and identify several critical flaws that render their attack inapplicable. 
To rectify these issues, we provide corrections to the previous cryptanalysis and present a refined small private key attack.
The experimental results demonstrate the correctness of our refined attack. 
Notably, our attack successfully penetrates common prime RSA instances that employ small private keys, allowing us to factorize the given modulus in reasonable time consumption.

\bibliographystyle{alpha}
\bibliography{ref}

\end{document}